\renewcommand{\a}{\mathrm{a}}
\renewcommand{\d}{\mathrm{d}}
\newcommand{\e}{\mathrm{e}}
\newcommand{\f}{\mathrm{f}}
\newcommand{\m}{\mathrm{m}}
\renewcommand{\r}{\mathrm{r}}
\newcommand{\dd}{\partial}
\newcommand{\RR}{\mathbb{R}}
\newcommand{\III}{\mathcal{I}}
\newcommand{\SSS}{\mathcal{S}}
\newcommand{\TTT}{\mathcal{T}}
\newcommand{\coloneqq}{\mathrel{\mathop:}=}
\newcommand{\nspec}{n_{\mathrm{spec}}}
\newcommand{\nelem}{n_{\mathrm{elem}}}
\newcommand{\nreac}{n_{\mathrm{reac}}}
\newcommand{\species}[1]{\mathrm{#1}}
\journalname{}
\begin{document}

\title{An optimization approach to kinetic model reduction for combustion chemistry}
\titlerunning{An optimization approach to kinetic model reduction for combustion chemistry}

\author{Dirk Lebiedz \and Jochen Siehr}

\institute{
D. Lebiedz\at
Center for Systems Biology (ZBSA),
University of Freiburg,\\
Habsburgerstra\ss{}e 49,
79104 Freiburg im Breisgau, Germany\\
\emph{Present address:}\\
  Institute for Numerical Mathematics, University of Ulm, \\
  Helmholtzstra\ss{}e 20, 89081 Ulm, Germany\\
  \email{dirk.lebiedz@uni-ulm.de}
\and
J. Siehr\at
Interdisciplinary Center for Scientific Computing (IWR),
University of Heidelberg,\\
Im Neuenheimer Feld 368,
69120 Heidelberg, Germany\\
\emph{Present address:}\\
  Institute for Numerical Mathematics, University of Ulm, \\
  Helmholtzstra\ss{}e 20, 89081 Ulm, Germany\\
  \email{jochen.siehr@uni-ulm.de}
}

\date{}

\maketitle


\begin{abstract} 
Model reduction methods are relevant when the computation time of a full
convection--diffusion--reaction simulation based on detailed chemical reaction
mechanisms is too large. In this article, we review a model reduction approach
based on optimization of trajectories and show its applicability to realistic
combustion models. As most model reduction methods, it identifies points on a
slow invariant manifold based on time scale separation in the dynamics of the
reaction system. The numerical approximation of points on the manifold is
achieved by solving a semi-infinite optimization problem, where the dynamics
enter the problem as constraints. The proof of existence of a solution for an
arbitrarily chosen dimension of the reduced model (slow manifold) is
extended to the case of realistic combustion models including thermochemistry
by considering the properties of proper maps.
The model reduction approach is finally applied to three models based on
realistic reaction mechanisms: 1.\ ozone decomposition as a small test case;
2.\ simplified hydrogen combustion for comparison with another model reduction
method;
3.\ syngas combustion as a test case including all features of a detailed combustion
mechanism.

\keywords{Model reduction \and
          Slow invariant manifold \and
          Chemical kinetics \and
          Nonlinear optimization} 
\PACS{82.33.Vx \and 02.40.Sf \and 02.40.Tt}
\subclass{90C90 \and 80A30 \and 92E20}
\end{abstract}

\section{Introduction}\label{intro}
The modeling of chemically reacting flows comprises the interplay between flow
(convection), diffusion, and chemical reaction processes. This
interplay is fairly complex if the model is based on a detailed combustion
mechanism involving a large number of chemically reactive species and reactions.
Here complexity reduction and model reduction methods can be effective tools.

A common aim of many model reduction approaches is the identification (and
computation) of so called slow invariant manifolds (SIM). Many model reduction
methods are applied to the chemical source term of the system of reaction
transport partial differential equations (PDE), which describe the reactive
flow. This means, the model reduction method is only regarding a system of
ordinary differential equations (ODE) modeling the kinetic source term.
Trajectories in chemical composition space are relaxing to the SIM while
converging towards equilibrium. In this sense, the SIM represent the slow
chemistry for a time scale separation between the tangent and normal dynamics. The
existence of a SIM is closely related to multiple time scales and time scale
separation and a spectral gap in the eigenvalues of the Jacobian of the
chemical source term characterizes the ratio of contraction rates in tangent
and normal directions.

In general, certain species will be chosen as ``represented'' ones for the
simulation of the reacting flow based on reduced chemistry models; these are
also called \emph{reaction progress variables}. These variables parametrize the reduced
chemistry model and for these variables the transport PDE are actually
solved. The values of the remaining unrepresented variables are computed in
dependence of the represented species by considering a point on the slow
manifold parameterized by the reaction progress variables.

Historically, model reduction techniques have been used since the quasi steady
state assumption (QSSA) and the partial equilibrium assumption (PEA) became
popular---methods that had to be performed by hand \cite{Warnatz2006}. By
contrast, modern model reduction methods compute a slow manifold approximation
automatically without the need of expert knowledge for identification of
reactions in partial equilibrium and species in steady state within a complex 
chemical reaction mechanism. A very popular automatic method is the intrinsic
low dimensional manifold (ILDM) method, that has originally been published by
Maas and Pope in \cite{Maas1992} in 1992, and its extensions. Also the
computational singular perturbation (CSP) method by Lam and Goussis
\cite{Lam1985,Lam1994} is widely applied, e.g.\ in \cite{Najm2010}. Another
widely applied method, e.g.\ in \cite{Ketelheun2011}, is the method of flamelet
generated manifolds \cite{Delhaye2008}. For an overview of model reduction
methods, see the review \cite{Gorban2005} and the references it contains.

The scope of this manuscript is a guidance to the application of the
optimization based model reduction method as introduced in \cite{Lebiedz2004c}
and future developments. The method is applied to chemical combustion
mechanisms and results are discussed.
The outline of this manuscript is as follows. In Section~\ref{s:chemistry},
the chemistry models regarded here are presented. The optimization problem for
identification of the SIM is explained in Section~\ref{s:variational_problem}.
A short overview of appropriate numerical
methods needed for solving the optimization problem is given in
Section~\ref{s:numerics}. Results of an application to three models are shown
and discussed in Section~\ref{s:results}.

\section{Model equations in combustion chemistry}\label{s:chemistry}
The model reduction approach discussed here is applied only to the reaction part
of the reactive flow model. In this section we review knowledge that can be
found in text books as e.g.\ \cite{Kee2003,Warnatz2006}.

\subsection{Mass conservation laws}\label{ss:conservation}
The general reaction transport equation in the variable $\zeta$ which can be
mass fractions, temperature, or any variable describing the state of the system 
depending on time $t$ and space $x$ can be written as
\begin{equation}
 \dd_t \zeta = \SSS(\zeta) + \TTT(\zeta,\dd_x \zeta,\dd_x^2\zeta),
\end{equation}
where $\SSS$ is the (chemical) source term and $\TTT$ the physical transport
operator, i.e.\ convection and diffusion.

The model comprises $\nspec$ chemical species composed by $\nelem$ chemical
elements, and the chemical source term $\SSS$ obeys the law of elemental mass
conservation and an energetic balance which will be regarded in
Section~\ref{ss:energy_balance}. In the following, the variables $z_i$ are given
in terms of specific moles, which are defined as the amount of species $i$
($n_i$) divided by the total mass ($m$) of the system, which is the same as the
species's mass fractions ($w_i$) divided by its molar mass ($M_i$):
\begin{equation*}
 z_i = \frac{n_i}{m} = \frac{w_i}{M_i}.
\end{equation*}

In these variables, the mass conservation of each element in the system is
formulated as
\begin{equation}\label{eq:conserv_mass}
 \bar{z}_i = \sum_{j=1}^{\nspec}\chi_{ij}z_j \quad i=1,\dots,\nelem,
\end{equation}
where $\chi_{ij}$ is the atomic composition coefficient---the number of element
$i$ in species $j$.
There is also a restriction to the choice of the elemental specific moles
$\bar{z}_i$ requiring that the mass fractions sum to one
\begin{equation*}
 \sum_{i=1}^{\nelem} \bar{M}_i\bar{z}_i = 1,
\end{equation*}
where $\bar{M}_i$ is the molar mass of element $i$. This is equivalent to the
conservation of the total mass of the system.

\subsection{Energetic balance}\label{ss:energy_balance}
Energy conservation has to be regarded, too. We consider systems within one of
the four standard thermodynamic environments, i.e.
\begin{itemize}
 \item isothermal and isochoric
 \item isothermal and isobaric
 \item adiabatic and isochoric (hence isoenergetic)
 \item adiabatic and isobaric (hence isenthalpic)
\end{itemize}
systems. Only the temperature is fixed in the isothermal case, while the
specific enthalpy $h$
\begin{equation}\label{eq:fix_h}
 h = \sum_{i=1}^{\nspec} \bar{H}_i^{\circ}(T)\;z_i,
\end{equation}
or the specific internal energy $e$
\begin{equation}\label{eq:fix_e}
 e = h - \frac{RT}{\bar{M}},
\end{equation}
respectively, are fixed in the adiabatic cases, where we assume an ideal gas
mixture, and $\bar{M}$ is
the mean molar mass
\begin{equation*}
 \bar{M} = \frac{1}{\sum_{i=1}^{\nspec}z_i}.
\end{equation*}
The molar enthalpy $\bar{H}_i^{\circ}(T)$ of species $i$ is computed by
evaluation of so called NASA polynomials \cite{Burcat2005}.

\subsection{Standard systems}\label{ss:systems}
In the optimization problem discussed later in Section
\ref{s:variational_problem}, the dynamics of the system are considered as
constraints. Therefore, the reaction ODE system that is given by the source term
only is discussed in the following. This ODE system includes mass action
kinetics and incorporates a differential form of the elemental mass conservation
laws.

The mass balance in specific moles $z_i$ can be formulated as
\begin{equation}\label{eq:source_mass}
 \d_t z_i \coloneqq \frac{\d}{\d t}z_i = S^{\m} \coloneqq \frac{\omega}{\rho}, \quad i=1,\dots,\nspec.
\end{equation}
In the right hand side of Equation (\ref{eq:source_mass}), the symbol $\rho$
refers to the overall mass density in the system which is given via
\begin{equation*}
 \rho =  \frac{m}{V} = \frac{p \bar{M}}{RT}.
\end{equation*}
In the isochoric case, the total mass and volume $V$ are to be known; in the
isobaric case the total pressure $p$, the gas constant $R$, the temperature $T$,
and the mean molar mass are necessary. The molar net chemical production rate
is denoted by $\omega$ in Eq.~(\ref{eq:source_mass}). It has to be computed
based on a set of chemical elementary reactions and their
parameters as described in Section~\ref{ss:kinetics}.

In our case, we formulate the energy balance via the right hand side of the
temperature equation $\d_t T = \frac{\d}{\d t}T = S^{\e}$. In the isothermal
case, we have $S^{\e}=0$ of course. In the adiabatic
cases, energy or enthalpy conservation define the concise form of $S^{\e}$.

\subsection{Chemical kinetics}\label{ss:kinetics}
In the remainder of this section, we review the computation of $\omega$. A
chemical combustion mechanism generally is given as a set of $\nreac$ elementary
reactions involving $\nspec$ species (and eventually a third body $\species{M}$)
\begin{equation*}
\sum_{i=1}^{\nspec} \nu_{ij}^{\prime} \species{X}_i \rightleftharpoons \sum_{i=1}^{\nspec} \nu_{ij}^{\prime\prime} \species{X}_i,
\quad j=1,\dots,\nreac
\end{equation*}
with the chemical species $\species{X}_i$ and the forward and reverse
stoichiometric coefficients $\nu_{ij}^{\prime}$ and $\nu_{ij}^{\prime\prime}$.
The forward and reverse rate of reaction $j$ is given via
\begin{equation}\label{eq:rates}
\begin{aligned}
 r_{\f,j} &= k_{\f,j} \prod_{i=1}^{\nspec} c_i^{\nu_{ij}^{\prime}}\\
 r_{\r,j} &= k_{\r,j} \prod_{i=1}^{\nspec} c_i^{\nu_{ij}^{\prime\prime}}
\end{aligned}
\end{equation}
with the concentrations $c_i$ of species $i$ and the rate constants
$k_{\f,j}$ and $k_{\r,j}$. Using the net stoichiometric coefficient
\begin{equation*}
 \nu_{ij} = \nu_{ij}^{\prime\prime} - \nu_{ij}^{\prime},
\end{equation*}
the net rate of change $\omega_i$ of species $i$ is computed as
\begin{equation}\label{eq:net_prod}
 \omega_i = \sum_{j=1}^{\nreac} \nu_{ij} (r_{\f,j} - r_{\r,j}).
\end{equation}
In case a third body $\species{M}$ takes part in reaction $j$, third body
collision efficiencies $\alpha_i$ for all species $i=1,\dots,\nspec$ are to be
given. The third body concentration
\begin{equation}\label{eq:tb_conc}
 c_{\species{M}} = \sum_{i=1}^{\nspec} \alpha_i c_i
\end{equation}
is then multiplied to the products in Equation (\ref{eq:rates}).

Formulas for the computation of the rate coefficients are stated in the
following. The elementary reactions in the mechanism considered here are given
in Arrhenius format and pressure dependent Troe format, resp. The three
parameters $A$, $b$, and $E_{\a}$ are given in the Arrhenius kinetics for each
reaction.
The forward reaction rate coefficient is computed via the extended Arrhenius
formula
\begin{equation}\label{eq:Arrhenius}
 k_{\f,j} = A \ T^{\tfrac{b}{1 \mathrm{K}}} \ \e^{-\tfrac{E_{\a}}{RT}}.
\end{equation}

A more complicated formula applies in case of pressure dependent reactions.
Here $k_{\f,j}$ is computed using Troe fall off curves
\cite{Gilbert1983,Troe1983}.
Both the low pressure rate coefficient $k_{0}$ and the high pressure
coefficient $k_{\infty}$ are given via the extended Arrhenius formula
(\ref{eq:Arrhenius}). These are used together with a third body $\species{M}$ to
compute the reduced pressure
\begin{equation*}
 p_{\r} = \frac{k_{0} c_{\species{M}}}{k_{\infty}},
\end{equation*}
where $c_{\species{M}}$ is defined as in Equation (\ref{eq:tb_conc}). With this parameter,
the final rate constant is computed as
\begin{equation*}
 k_{\f,j} = k_{\infty} \frac{p_{\r}}{1 + p_{\r}} F
\end{equation*}
with a function $F$. To compute the value of $F$, Gilbert, Luther, and Troe
\cite{Gilbert1983,Troe1983} introduced the formula
\begin{equation*}
 \lg F = \left\{1+ \left[\frac{\lg p_{\r} + c}{n - d(\lg p_{\r} + c )} \right]^2\right\}^{-1} \lg F_{\textrm{c}}
\end{equation*}
with a set of simplifications
\begin{align*}
 c &= -0.4 - 0.67 \lg F_{\textrm{c}} \\
 n &= 0.75 - 1.27 \lg F_{\textrm{c}}\\
 d &= 0.14
\end{align*}
and the F-center-value
\begin{equation*}
 F_{\textrm{c}} = (1-a) \exp\left(-\frac{T}{T^{***}}\right)
 + a \exp\left(-\frac{T}{T^{*}}\right)
 + \exp\left(-\frac{T^{**}}{T}\right)
\end{equation*}
which includes the four parameters $a$, $T^{*}$, $T^{**}$, and $T^{***}$. These
are given for each so called Troe-reaction.

The reverse reaction rate constant $k_{\f,j}$ of reaction $j$ is computed via
the equilibrium constant $K_{c,j}$ of the reaction.\footnote{
 In some publications, the reverse rate coefficients of Arrhenius type reactions
 are computed with fitted Arrhenius parameters for the reverse reactions. We
 also used this strategy in previous publications as
 e.g.\ \cite{Lebiedz2010,Lebiedz2011}. However, this is impossible in case of
 pressure dependent reactions and it may lead to inconsistent values of
 thermodynamic quantities in the mass action kinetics in relation to the heat of
 reaction. Therefore, we prefer the thermodynamic approach in this
 manuscript.}
The equilibrium constant of reaction
$j$ in terms of concentrations is given as
\begin{equation*}
 K_{c,j} = \left(\frac{p^{\circ}}{RT}\right)^{\nu_j}\exp\left(\frac{\Delta S_{\r,j}^{\circ}}{R}
           - \frac{\Delta H_{\r,j}^{\circ}}{RT}\right)
\end{equation*}
with the standard pressure $p^{\circ}$ and the net change of the number of
species present in the gas phase
\begin{equation*}
 \nu_j = \sum_{i=1}^{\nspec} \nu_{ij}.
\end{equation*}
The change in entropy $\Delta S_{\r,j}$ and enthalpy $\Delta H_{\r,j}$ can be
computed by using an evaluation of the NASA polynomials for their molar
values of the species involved. The reverse rate of reaction $j$ finally is
\begin{equation*}
 k_{\r,j} = \frac{k_{\f,j}}{K_{c,j}}.
\end{equation*}

\section{Optimization problem}\label{s:variational_problem}
In 2004, Lebiedz introduced a model reduction method based on the minimization
of entropy production rate along trajectories in chemical composition space
\cite{Lebiedz2004c}. The basic idea is that the SIM is characterized by
maximum relaxation of the system dynamics under given constraints of fixed
reaction progress variables. This approach has been extended and refined in a
number of following publications
\cite{Lebiedz2010a,Lebiedz2010,Lebiedz2011a,Reinhardt2008}.

The general geometric situation in the phase space of the reaction system
spanned by the variables $z_i$ can be seen in the sketch in Figure \ref{f:mani}.
Trajectories bundle on the manifolds of slow motion, that are hierarchically
ordered.
\begin{figure}[htpb]
 \centering
 \includegraphics[width=0.8\textwidth]{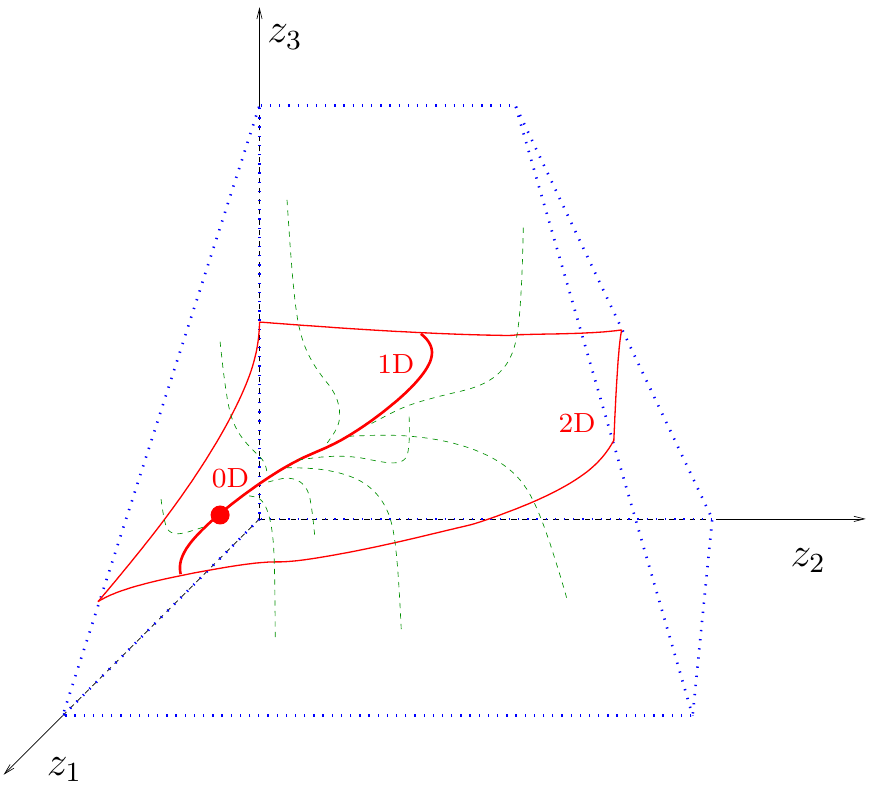}
 \caption[Sketch of manifold]{\label{f:mani}Sketch of the chemical composition
 space. The domain, where the
 dynamics take place is the blue-bounded polytope, here three-dimensional.
 Within this polytope there might be (depending on a possible time scale
 separation) a two-dimensional manifold (depicted in red), where
 (shown in green) trajectories relax onto. The trajectories relax
 then onto the one-dimensional manifold within the two-dimensional one. Finally
 the trajectories converge toward the zero-dimensional manifold:
 the equilibrium.}
\end{figure}
The aim is to compute an approximation of such a manifold of given dimension
point-wise such that the free variables are computed depending on the given
reaction progress variables which parametrize the manifold.

Following the idea of
\cite{Lebiedz2004c,Lebiedz2010a,Lebiedz2010,Lebiedz2011a,Reinhardt2008},
an optimization problem has to be solved for the approximation of points on the
manifold. It can be written in specific moles and temperature as minimization
of an objective functional
\begin{subequations}\label{eq:op}
 \begin{equation} \label{eq:op:obj}
  \min_{z,T}  \int_{t_0}^{t_{\textrm f}}  \Phi(z(t)) \; \d t
 \end{equation}
 subject to
 \begin{align}
   \d_t z(t) &= S^{\m}(z(t),T(t)) \label{eq:op:dyn_m}\\
   \d_t T(t) &= S^{\e}(z(t),T(t)) \label{eq:op:dyn_e}\\
           0 &= C\left(z(t_{*}),T(t_{*})\right) \label{eq:op:con}\\
           0 &= z_j(t_{*}) - z_j^{t_{*}},\quad j \in \III_{\mathrm{rpv}} \label{eq:op:pv}\\
           0 &\leqslant z(t),T(t) \label{eq:op:pos}
 \end{align}
 and
 \begin{align}
  t     &\in [t_0,t_{\textrm{\upshape f}}] \\
  t_{*} &\in [t_0,t_{\textrm{\upshape f}}]\quad \text{\upshape (fixed)}.
 \end{align}
\end{subequations}
In the following, we explain optimization problem (\ref{eq:op}) in detail
starting with the constraints.

\paragraph{Chemical source and heat of reaction}
For the numerical solution of the optimization problem, the dynamics of the
system have to be computed. The dynamics are given via the ODE system in the
constraints (\ref{eq:op:dyn_m}) and (\ref{eq:op:dyn_e}). This ensures to
identify as solution of (\ref{eq:op}) a special solution trajectory piece.

\paragraph{Conservation relations}
All necessary additional conservation laws are combined in the (nonlinear)
function $C$. As discussed before, the dynamics (\ref{eq:op:dyn_m})
and (\ref{eq:op:dyn_e}) contain differential forms of the balances of mass and
energy. The concise values of the conserved quantities have to be specified
at some (fixed) point in time along a solution which we choose to be $t_*$. This
is Equation~(\ref{eq:conserv_mass}) and a specification of a fixed temperature,
enthalpy, or energy, as e.g.\ (\ref{eq:fix_h}) or (\ref{eq:fix_e}), depending
on the assumed thermodynamic environment.

\paragraph{SIM parametrization}
In order to approximate the SIM, a parametrization needs to be specified.
The species (i.e.\ the specific moles $z_i$) which serve as reaction progress variables
and especially their number have to be specified in advance. The number of
progress variables determines the chosen dimension of the SIM to be
approximated.

In the case illustrated in Figure \ref{f:mani}, one might choose $z_1$ and
$z_2$ as reaction progress variables for parametrization in order to compute a
value for the remaining free variable $z_3$ which is supposed to be on the
two-dimensional manifold. Alternatively, one can choose only $z_1$ as reaction
progress variable for approximation of the one-dimensional manifold.

The indices of the reaction progress variables are collected in the index set
$\III_{\text{rpv}}\subset\{1,\dots,\nspec\}$, and their values are fixed in the
optimization problem to $z_j^{t_{*}}$.

\paragraph{Positivity}
Since only positive values of specific moles and temperature have a physical
meaning, this is included in (\ref{eq:op:pos}). In the optimization context
with a realistic combustion mechanism included in the constraints, this is also
technically important as negative values of $z_i$ and $T$ can result
in undefined values (logarithm of a negative number) of the right hand sides
$S^{\m}$ and $S^{\e}$. The positivity and the linear mass conservation relations
define a polytope as depicted in Figure \ref{f:mani}.

\bigskip

The chosen point in time $t_{*} \in [t_0,t_\f]$ specifies the position where the
fixation of the reaction progress variables and the constraint $C$ is applied
along the trajectory piece, which is optimized. In first publications,
e.g.\ \cite{Lebiedz2004c,Lebiedz2010},
$t_{*}=t_0$ is chosen. This incorporates the demand that the trajectories are
fully relaxed to the SIM at time $t_{*}$ and no further relaxation takes place
afterward.

The inverse idea is the fixation at the end point $t_{*}=t_\f$. The solution
point and solution trajectory piece of the optimization problem (\ref{eq:op})
is supposed to be part of the SIM. Therefore, also in backward direction of time
the trajectory is supposed to be already relaxed.

This is related to the definition of positive and negative invariance of a set
under a flow in dynamical systems theory.
\begin{definition}[Invariant set, \cite{Wiggins1996}]
 Let $\Omega\subset\RR^n$ be a set. It is called \emph{invariant} under the
 vector field $\dot{\zeta} = \SSS(\zeta)$, $\zeta\in\RR^n$ if for any
 $\zeta_0\in\Omega$ it holds that $\zeta(t;\zeta_0)\in\Omega$ for all $t\in\RR$,
 where $\zeta(t;\zeta_0)$ denotes the solution of the initial value problem
 $\dot{\zeta} = \SSS(\zeta)$ with initial value $\zeta_0$ at $t=0$.
 It is called \emph{positively invariant} if this conditions holds for positive
 $t\geqslant0$ and \emph{negatively invariant} if the conditions holds for
 negative $t\leqslant0$.
\end{definition}

Clearly, the essential degrees of freedom of the optimization problem is the
effective phase space dimension $\nspec - \nelem$ minus the number of
reaction progress variables $|\III_{\mathrm{rpv}}|$. The goal of solving the
optimization problem (\ref{eq:op}) is the determination (species reconstruction)
of the ``missing'' values $z_i(t_{*}),\ i \notin \III_{\mathrm{rpv}}$, as a
function of the parameters $z_j^{t_{*}},\ j \in \III_{\mathrm{rpv}}$.

\subsection{The objective functional}\label{ss:criterion}
The relaxation criterion $\Phi$ is supposed to measure the degree of chemical
force relaxation along a trajectory. Several criteria have been tested for their
SIM approximation quality, especially in \cite{Lebiedz2010}.

The SIM to be approximated is considered to be slow. This means, the residence
time of the trajectory in some open neighborhood of a point on the SIM should
be large---conversely the change and (to second order) the rate of change of the
variable values is supposed to be small. A similar idea has been pointed out
already in \cite{Girimaji1998}. The rate of change is closely related
to the curvature of the trajectories as geometrical objects in phase space.
The rate of change of the specific moles is simply $\dot{z} = S^{\m}$. Its rate
of change is the second derivative
\begin{equation*}
 \ddot{z}(t) = J_{S^{\m}}(z(t),T(t))\;S^{\m}(z(t),T(t)),
\end{equation*}
where we denote the Jacobian of a function $S$ as $J_S$.
This can be seen as a directional derivative of the chemical source w.r.t.\ its
own direction $v$ which should be regarded normalized
$v\coloneqq\frac{\dot{z}}{\|\dot{z}\|_2}=\frac{S^{\m}}{\|S^{\m}\|_2}$
\begin{equation*}
 D_v \dot{z}(t) \coloneqq \frac{\d}{\d \alpha} S^{\m}(z(t) + \alpha v,T(t))
  \Big|_{\alpha=0}  = J_{S^{\m}}(z(t),T(t)) \frac{S^{\m}(z(t),T(t))}{\|S^{\m}(z(t),T(t))\|_2},
\end{equation*}
where $\|\cdot\|_2$ denotes the Euclidean norm.
The evaluation of this expression within the integral should be done in arc
length. A re-parametrization cancels out the norm $\|S^{\m}(z(t),T(t))\|_2$
such that (in notation that coincides with the general problem (\ref{eq:op}))
a reasonable candidate for the criterion $\Phi$ would be
\begin{equation}
 \Phi(z(t)) = \| J_{S^{\m}}(z(t),T(t))\;S^{\m}(z(t),T(t)) \|_2^2.
\end{equation}

\subsection{Solution of the optimization problem}\label{ss:sol_theory}
In \cite{Lebiedz2011a}, the authors study theoretical properties of the
optimization based model reduction method as described in the sections before.
It is shown there always exists a solution of the optimization problem
(\ref{eq:op}) with only linear (mass conservation) constraints if there exists a
feasible solution.

In case of a realistic combustion mechanism as a model for an adiabatic
system, the nonlinear internal energy conservation or enthalpy conservation comes
into play. The existence proof will be extended to this cases in the following.
The crucial point is the compactness of the feasible domain, which is more
complicated to ensure in the nonlinear case.

A simple way to guarantee compactness of the feasible domain would be an upper
bound for the temperature. Together with the compactness argument for the linear
constraints \cite{Lebiedz2011a}, the compactness of the feasible domain is
obvious. But it is not clear at all where to choose the upper cut off for the
temperature.

We avoid this temperature cut off but make use of the definition of molar
enthalpy via NASA polynomials. Thereby we accept the temperature to outrange the
domain where the NASA polynomials approximate the molar enthalpy of the species
appropriately.

The specific enthalpy $h$ of a system is given as
\begin{equation} \label{eq:entropy}
 h = \sum_{i=1}^{\nspec} \bar{H}_i^{\circ}(T)\;z_i.
\end{equation}
The equation for the specific internal energy $e$ is
\begin{equation} \label{eq:energy}
 e = h - RT\sum_{i=1}^{\nspec} z_i.
\end{equation}
The molar enthalpy $\bar{H}_i^{\circ}(T)$ of species $i$ is a continuous
function in the temperature $T$. In our case, it is computed by evaluation of the
NASA polynomials. Their formula is
\begin{equation} \label{eq:nasa}
 \frac{\bar{H}_i^{\circ}(T)}{R} = a_6 + a_1 T + \frac{a_2}{2}T^2 + \frac{a_3}{3}T^3 + \frac{a_4}{4}T^4 + \frac{a_5}{5}T^5
\end{equation}
with two sets of coefficients $a_i$, $i=1,\dots,6$. One set is given for a
temperature lower than a certain switch temperature $T < T_{\mathrm{sw}}$ and
one set of coefficients for high temperature $T \geqslant T_{\mathrm{sw}}$.
The two branches are connected at
$T_{\mathrm{sw}}$ at least continuously. There are also upper and lower bounds
for the temperature, where the polynomial approximation is valid. We ignore
these bounds for the following theory.

\begin{definition}[Proper map, \cite{Lee2011}]
 Let $X$ and $Y$ be topological spaces. A map (continuous or not)
 $H : X \rightarrow Y$
 is called \emph{proper} if the preimage $H^{-1}(K)$ of each compact subset
 $K\subset Y$ is compact.
\end{definition}
To formulate a sufficient condition for properness we need the
\begin{definition}[Divergence to infinity, \cite{Lee2011}]
 If $X$ is a topological space, a sequence $(x_n)$ in $X$ is said to
 \emph{diverge to infinity} if for every compact set $K \subseteq X$ there are
 at most finitely many indices $n$ with element $x_n \in K$.
\end{definition}

A sufficient condition for properness is the following
\begin{lemma}[Properness condition, \cite{Lee2011}]
 Suppose $X$ and $Y$ are topological spaces, and $H : X \rightarrow Y$ is a
 continuous map. If X is a second countable Hausdorff space and $F$ takes
 sequences diverging to infinity in $X$ to sequences diverging to infinity in
 $Y$, then $F$ is proper.
\end{lemma}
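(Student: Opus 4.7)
The plan is to prove compactness of $F^{-1}(K)$ by reducing to sequential compactness, which is legitimate because second countability together with Hausdorffness makes the two notions coincide for subspaces of $X$. Concretely, I would fix an arbitrary compact $K \subseteq Y$ and an arbitrary sequence $(x_n) \subseteq F^{-1}(K)$, with the task of extracting a subsequence that converges to a point of $F^{-1}(K)$.

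The core step is a dichotomy for $(x_n)$. If $(x_n)$ diverged to infinity in $X$, then by hypothesis $(F(x_n))$ would diverge to infinity in $Y$, meaning that only finitely many $F(x_n)$ could lie in the compact set $K$; this contradicts $F(x_n) \in K$ for all $n$. Hence $(x_n)$ does not diverge to infinity, so by the very definition of divergence to infinity there exists a compact $K' \subseteq X$ containing infinitely many of the $x_n$, and I pass to the subsequence lying in $K'$.

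Next I invoke the standard equivalence of compactness and sequential compactness in second countable Hausdorff spaces: compactness together with first countability (implied by second countability) and the $T_1$ property turns limit points into sequential limits, while sequential compactness combined with the Lindel\"of property (also implied by second countability) recovers compactness. Applied to $K'$ viewed as a subspace, this produces a further subsequence $x_{n_{k_j}} \to x \in K'$; continuity of $F$ gives $F(x_{n_{k_j}}) \to F(x)$, and since the terms all lie in $K$ and compact subsets of the Hausdorff $Y$ are closed, $F(x) \in K$, so $x \in F^{-1}(K)$. This shows $F^{-1}(K)$ is sequentially compact, and since it inherits both second countability and Hausdorffness as a subspace of $X$, the same equivalence upgrades sequential compactness to compactness.

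The main obstacle is the careful bookkeeping around the compact-versus-sequentially-compact equivalence outside the metric setting: one needs both the first countability and the Lindel\"of property supplied by second countability in order to run the implication in each direction, and these must be transferred to the subspaces $K'$ and $F^{-1}(K)$. A minor side issue is the implicit appeal to the fact that compact subsets of $Y$ are closed, which is standard under the convention — tacit in the paper's subsequent application to phase space — that the target $Y$ is Hausdorff.
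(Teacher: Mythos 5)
The paper does not actually prove this lemma: its ``proof'' is the single line ``See \cite[p.~119]{Lee2011}.'' Your argument supplies the missing content, and it is essentially the standard textbook proof: rule out divergence to infinity of a sequence in $F^{-1}(K)$ via the contrapositive of the hypothesis, trap a subsequence in a compact $K'$, and shuttle between compactness and sequential compactness using the two consequences of second countability (first countability for ``compact $\Rightarrow$ sequentially compact'' via cluster points, the hereditary Lindel\"of property for the converse via countable compactness). All of these steps are sound, and second countability does pass to the subspaces $K'$ and $F^{-1}(K)$ as you need. Your closing caveat is in fact more than a ``minor side issue'': the step $F(x)\in K$ genuinely requires compact subsets of $Y$ to be closed, and the lemma as transcribed in the paper (with $Y$ an arbitrary topological space) is false without such an assumption --- take $X=[0,1]$ with its usual topology, $Y=\RR$ with the cofinite topology, and $F$ the inclusion; the hypothesis on diverging sequences is vacuous since $X$ is compact, every subset of $Y$ is compact, yet $F^{-1}\bigl((0,1)\bigr)=(0,1)$ is not compact. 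So you were right to impose Hausdorffness of $Y$ explicitly; in the paper's application $Y=\RR$ with its standard topology, so nothing is lost. (One cosmetic point: the $T_1$ axiom you invoke for extracting a convergent subsequence is not needed --- every sequence in a compact space has a cluster point, and first countability alone converts a cluster point into a subsequential limit.)
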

\begin{proof}
 See \cite[p.~119]{Lee2011}.
\end{proof}

\begin{lemma}[Properness of $h$ and $e$]\label{lemma:proper}
 The specific enthalpy $h$ and the specific internal energy $e$ defined via
 NASA polynomials seen as functions in $T$ and $z$ are proper
 maps.
\end{lemma}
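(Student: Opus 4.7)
The plan is to verify the hypotheses of the Properness condition lemma with $Y = \RR$. I would take the natural domain $X$ to be the physically admissible region: the set of $(z, T)$ with $z$ lying in the polytope $P$ defined by the positivity $z_j \geq 0$ and the linear mass conservation of Section~\ref{ss:conservation}, together with $T \geq 0$. As a subset of $\RR^{\nspec+1}$, $X$ is automatically second countable and Hausdorff. Continuity of $h$ on $X$ is immediate, since each $\bar{H}_i^\circ(T)$ is piecewise polynomial with the two branches glued continuously at $T_{\mathrm{sw}}$, and $h$ is linear in $z$. Only the sequential divergence hypothesis then remains to be checked.

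The key observation is that $P$ is compact: positivity together with the mass-fraction identity $\sum_{i=1}^{\nspec} M_i z_i = 1$ forces $0 \leq z_i \leq 1/M_i$. Hence any compact subset of $X$ is contained in some $P \times [0, T_{\max}]$, so a sequence $(z_n, T_n)$ that diverges to infinity in $X$ must satisfy $T_n \to \infty$. On its high-temperature branch, each $\bar{H}_i^\circ(T)$ is a degree-five polynomial in $T$; under the physically natural assumption that the leading non-zero coefficient is strictly positive (equivalently, $C_p > 0$ at high $T$), one obtains $\bar{H}_i^\circ(T_n) \to +\infty$ for every species $i$, and in particular $\min_i \bar{H}_i^\circ(T_n) > 0$ for all sufficiently large $n$. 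Combining this with the uniform lower bound $\sum_i z_{n,i} \geq 1/\max_i M_i$, which follows from $\sum_i M_i z_{n,i} = 1$ and $z_{n,i} \geq 0$, yields
\begin{equation*}
 h(z_n, T_n) = \sum_{i=1}^{\nspec} \bar{H}_i^\circ(T_n)\, z_{n,i}
 \geq \Big(\min_i \bar{H}_i^\circ(T_n)\Big) \sum_i z_{n,i} \longrightarrow +\infty,
\end{equation*}
so the Properness condition lemma applies and $h$ is proper.

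For $e = h - RT \sum_i z_i$, the same conclusion follows: $\sum_i z_{n,i}$ is bounded above on $P$, so $R\,T_n \sum_i z_{n,i}$ grows at most linearly in $T_n$, whereas the leading contribution to $h$ dominates polynomially of degree five in $T_n$; hence $e(z_n, T_n) \to +\infty$ as well. The delicate step I expect to require the most care is the positivity of the leading high-$T$ NASA coefficient for each species: the lemma requires $\bar{H}_i^\circ(T) \to +\infty$ for \emph{every} $i$, not merely for a single dominant species, so this must be stated explicitly as an assumption on the thermodynamic data (physically unavoidable from $C_p > 0$, but not something that follows from the formula (\ref{eq:nasa}) alone).
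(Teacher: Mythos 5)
Your proof is correct (granting the sign assumption on the NASA coefficients that you explicitly flag) and takes a genuinely different route from the paper's. The paper works on all of $\RR^{\nspec+1}$ and argues that $h$ and $e$, being non-constant polynomials of degree six in $(z,T)$, carry sequences diverging to infinity to sequences diverging to infinity; that general claim about multivariate polynomials is false, and indeed fails here: the sequence $(z_n,T_n)=(0,n)$ diverges to infinity in $\RR^{\nspec+1}$ while $h(0,n)=0$ for all $n$, so $h^{-1}(\{0\})$ contains the unbounded line $\{0\}\times\RR$ and $h$ is \emph{not} proper on the whole space. Your restriction of the domain to the physical region $X=P\times[0,\infty)$ is exactly what rescues the statement: compactness of the polytope $P$ (from positivity and $\sum_i M_i z_i=1$) reduces divergence to infinity in $X$ to $T_n\to\infty$, and the uniform lower bound $\sum_i z_{n,i}\geq 1/\max_i M_i$ converts divergence of each molar enthalpy into divergence of $h$. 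This restricted version is also precisely the form used downstream in Lemma~\ref{lemma:compact}, where the preimage of the fixed enthalpy or energy is intersected with the polytope anyway, so nothing is lost. What your route costs is the extra hypothesis that each $\bar{H}_i^{\circ}(T)\to+\infty$ as $T\to\infty$, which, as you correctly observe, does not follow from Eq.~(\ref{eq:nasa}) alone and must be imposed on the thermodynamic data; for $e$ there is one further small caveat you should make explicit: if the leading nonvanishing high-$T$ term of $\bar{H}_i^{\circ}/R$ were the linear one, the subtracted term $RT\sum_i z_i$ grows at the same order, so you need $a_1>1$ (equivalently $\bar{C}_v>0$), not merely a positive leading coefficient, to conclude $e\to+\infty$. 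With that noted, your argument is the sound one, and the paper's shorter polynomial argument should be read as shorthand for it.
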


\begin{proof}
 The vector space $\RR^{\nspec+1}$ is a second countable Hausdorff space. Any
 non-constant polynomial takes sequences diverging to infinity in $\RR^n$
 equipped with its Euclidean metric induced topology to
 sequences diverging to infinity in $\RR$. We can see $h$ and $e$ as polynomials
 of sixth degree in $z_i$ and $T$, see Eq.~(\ref{eq:entropy}),
 (\ref{eq:energy}), and (\ref{eq:nasa}).
 Therefore, $h:\RR^{\nspec+1}\rightarrow\RR$ and
 $e:\RR^{\nspec+1}\rightarrow\RR$ are proper maps.\qed
\end{proof}

Using this information, we can extend the existence lemma~2.1 in
\cite{Lebiedz2011a}.
\begin{lemma}\label{lemma:compact}
 The feasible set at $t_*$
 \begin{equation*}
 M = \{(z,T): C(z,T)=0;\ z_j - z_j^{t_*}=0,\;j \in \III_{\mathrm{rpv}};\ (z,T) \geqslant 0\}
 \end{equation*}
 is compact.
\end{lemma}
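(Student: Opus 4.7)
The plan is to show that $M$ is closed and bounded in $\RR^{\nspec+1}$ and hence compact, by exploiting Lemma~\ref{lemma:proper} to control the temperature variable, while recycling the polytope argument from \cite{Lebiedz2011a} to control the specific moles. The isothermal case is immediate: $T$ is fixed and the constraints reduce to the linear ones already treated in \cite{Lebiedz2011a}, so the interesting case is the adiabatic one, where $C$ contains a nonlinear equation of the form $h(z,T)=h_0$ or $e(z,T)=e_0$ in addition to the linear elemental mass conservation (\ref{eq:conserv_mass}) and the total-mass normalization.

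First I would note that $M$ is closed, since it is the intersection of preimages of closed sets (the singleton $\{0\}$, the half-space $[0,\infty)$, etc.) under continuous maps; here I rely on continuity of $C$, which in the adiabatic case in turn requires continuity of the NASA polynomial representation of $\bar H_i^{\circ}(T)$. Next I would establish boundedness. The key observation is that the adiabatic part of $C$ can be written as $h(z,T) = h_0$ (or $e(z,T)=e_0$), and by Lemma~\ref{lemma:proper} the map $h$ (respectively $e$) is proper on $\RR^{\nspec+1}$. Since $\{h_0\}\subset\RR$ is compact, its preimage $h^{-1}(\{h_0\})\subset\RR^{\nspec+1}$ is compact by the definition of properness. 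Thus $M$ is contained in a compact set, and being closed it is itself compact.

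To make this fully rigorous I would verify that $M$ is really a subset of $h^{-1}(\{h_0\})$, which follows directly from the enthalpy constraint being one component of $C(z,T)=0$, and that the remaining constraints (linear mass conservation, positivity, fixation of the reaction progress variables) only carve out a closed subset inside this compact set. Combining the polytope structure in $z$ from \cite{Lebiedz2011a} with the properness-based bound on $T$ yields compactness.

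The main obstacle I anticipate is not the set-theoretic argument but rather checking that the enthalpy (or energy) constraint is the \emph{only} place where an unbounded $T$ could enter, and that the piecewise NASA polynomial from (\ref{eq:nasa}) still gives rise to a proper map even at the switch temperature $T_{\mathrm{sw}}$ where two polynomial branches are glued. Lemma~\ref{lemma:proper} already sidesteps this by treating $h$ and $e$ globally as (piecewise) polynomials of degree six in $(z,T)$ and invoking the fact that nonconstant polynomials send sequences diverging to infinity to sequences diverging to infinity; the only subtlety is to confirm that the coefficients in (\ref{eq:nasa}) are such that $h$ is genuinely nonconstant in $T$ for any fixed admissible $z\neq 0$, which physically holds because enthalpies are strictly monotone in $T$ on each branch.
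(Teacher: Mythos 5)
Your proof takes essentially the same route as the paper: the isothermal case is dispatched via the Heine--Borel argument for the mass-conservation/positivity polytope, and the adiabatic case by combining the compact level set $h^{-1}(\{h_0\})$ (or $e^{-1}(\{e_0\})$) supplied by Lemma~\ref{lemma:proper} with the remaining closed polytope constraints. The only cosmetic difference is that you conclude via ``closed subset of a compact set'' where the paper says ``intersection of compact sets''; both are valid once Lemma~\ref{lemma:proper} is granted, and your closing remark about nonconstancy of $h$ in $T$ concerns that lemma rather than this one.
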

\begin{proof}
 Case 1: isothermal combustion\newline
 The mass conservation together with the positivity and the fixed temperature
 define a polytope in  $\RR^{\nspec+1}$ which is closed and bounded,
 hence (Heine--Borel theorem) compact.\newline
 Case 2: adiabatic combustion\newline
 As in the isothermal case, the variables $z_i$ are restricted to a compact
 polytope due to elemental mass conservation and positivity constraints.
 Following Lemma~\ref{lemma:proper}, the preimage of the singleton of the fixed
 energy/enthalpy is a compact subset of $\RR^{\nspec+1}$. This subset may only
 be further constrained by the polytope defined by the mass conservation and
 positivity, and the intersection of compact subsets is compact.\qed
\end{proof}

\begin{lemma}[Existence of a solution]
 If the map $\Phi:\RR^{\nspec}\rightarrow\RR$ in the objective functional of the
 optimization problem $(\ref{eq:op})$ is a continuous function and the feasible
 set is not empty, there exists a solution of problem $(\ref{eq:op})$.
\end{lemma}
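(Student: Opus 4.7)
The plan is a standard Weierstrass extreme value argument, after reducing the infinite-dimensional problem~(\ref{eq:op}) to a finite-dimensional minimization over a compact set. The key observation is that $t_{*}\in[t_0,t_{\textrm f}]$ is fixed, so once the state $(z(t_{*}),T(t_{*}))$ is prescribed, the dynamical constraints (\ref{eq:op:dyn_m})--(\ref{eq:op:dyn_e}) determine the whole trajectory piece uniquely (the right-hand sides built from Arrhenius expressions, mass-action monomials, and NASA polynomials are locally Lipschitz on the positive orthant). Global existence on $[t_0,t_{\textrm f}]$ is automatic because the conservation relation $C$ is preserved by the flow (elemental mass balance and the adiabatic energy/enthalpy balance are built into $S^{\m}$ and $S^{\e}$), so the trajectory remains in the compact set $M$ of Lemma~\ref{lemma:compact} for every $t$.

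With this reduction, (\ref{eq:op}) is equivalent to minimizing
\begin{equation*}
 J(z_{*},T_{*}) \coloneqq \int_{t_0}^{t_{\textrm f}} \Phi\bigl(z(t;z_{*},T_{*})\bigr)\,\d t
\end{equation*}
over the set
\begin{equation*}
 \widetilde{M} \coloneqq \bigl\{(z_{*},T_{*})\in M : z(t;z_{*},T_{*})\geqslant 0,\; T(t;z_{*},T_{*})\geqslant 0 \text{ for all } t\in[t_0,t_{\textrm f}]\bigr\},
\end{equation*}
where $(z(\,\cdot\,;z_{*},T_{*}),T(\,\cdot\,;z_{*},T_{*}))$ denotes the unique solution of (\ref{eq:op:dyn_m})--(\ref{eq:op:dyn_e}) with value $(z_{*},T_{*})$ at $t_{*}$. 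The set $\widetilde{M}$ is closed in $M$, since the data-to-solution map $(z_{*},T_{*})\mapsto(z(\cdot),T(\cdot))\in C([t_0,t_{\textrm f}])$ is continuous and the set of everywhere-nonnegative trajectories is closed in the sup-norm topology; combined with compactness of $M$ from Lemma~\ref{lemma:compact}, this yields compactness of $\widetilde{M}$. Nonemptiness of $\widetilde{M}$ is exactly the feasibility hypothesis.

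It remains to show $J$ is continuous on $\widetilde{M}$. Continuous dependence of ODE solutions on initial data gives continuity of $(z_{*},T_{*})\mapsto z(\cdot;z_{*},T_{*})$ into $C([t_0,t_{\textrm f}])$, and composing with the continuous $\Phi$ produces a continuous map into $C([t_0,t_{\textrm f}])$; uniform boundedness of $\Phi$ on the compact image (where the trajectories live) lets one conclude continuity of the integral $J$ by dominated convergence. A direct application of the Weierstrass extreme value theorem to the continuous functional $J$ on the nonempty compact set $\widetilde{M}$ produces the desired minimizer. The main technical obstacle is precisely this passage from the infinite-dimensional variational problem to a finite-dimensional one on a compact domain at $t_{*}$; the essential input making this work is Lemma~\ref{lemma:compact}, which removes the difficulty introduced in the adiabatic case by the nonlinear enthalpy/energy constraint.
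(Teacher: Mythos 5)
Your argument follows essentially the same route as the paper's own proof: reduce the semi-infinite problem to a finite-dimensional minimization of the continuous map $(z,T)(t_{*})\mapsto\int_{t_0}^{t_{\textrm f}}\Phi(z(t))\,\d t$ over the feasible set at $t_{*}$, invoke the compactness established in Lemma~\ref{lemma:compact}, and conclude with the Weierstra\ss{} theorem. You supply more detail than the paper does (which delegates the continuity of the reduction map to an earlier reference), notably the closedness of the set of initial data whose trajectories remain nonnegative, but the underlying argument is the same.
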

\begin{proof}
 Following the argumentation in \cite{Lebiedz2011a}, the semi-infinite
 optimization problem ($\ref{eq:op}$) can be reduced to a finite dimensional
 optimization problem by construction of a continuous map
 $(z,T)(t_*)\mapsto \int_{t_0}^{t_{\textrm f}}  \Phi\left(z(t)\right) \; \d t$.
 As seen in Lemma~\ref{lemma:compact}, the feasible set $M$ is compact.
 Therefore, existence follows from the Weierstra{\ss} theorem.\qed
\end{proof}

\section{Numerical methods}\label{s:numerics}
The semi-infinite optimization problem ($\ref{eq:op}$) can be solved after
suitable discretization of the ODE constraints e.g.\ either by a sequential
quadratic programming (SQP) \cite{Powell1978} or an Interior Point (IP) method,
see e.g.\ the review \cite{Forsgren2002}.

\subsection{Discretization}
In general, there are two ways for discretization and solution of
($\ref{eq:op}$): the sequential and the simultaneous approach.

\subsubsection{Sequential approach} In the sequential approach, ODE solution and
optimization are fully decoupled. The initial values $(z(t_0),T(t_0))$ are used
as optimization variables. Starting at $t_0$, the system is integrated with a
stiff ODE solver, e.g.\ via a backward differentiation formulae (BDF)
scheme \cite{Curtiss1952}. The integrand in
the objective function (\ref{eq:op:obj}) is integrated itself, and the end point
is evaluated in sense of a Mayer term objective functional. The optimization
iteration is performed after that based on the results of the integration and
computed derivative information. In the cases $t_* = t_0$ and $t_* = t_\f$, a
single shooting is appropriate as we deal with a stable ODE system; whereas in
case of $t_* \in (t_0,t_\f)$, a double shooting is needed for the values at
$t_*$.

\subsubsection{Simultaneous approach} Sometimes (e.g.\ in case of unstable or
extremely stiff systems), it is beneficial to use an all-at-once approach using
collocation formulae \cite{Ascher1998}. In this simultaneous approach, the
solution of the dynamic constraints and the optimization are coupled. The
interval $[t_0,t_\f]$ is divided into sub-intervals. Via e.g.\ a collocation
method, polynomials are constructed on each sub-interval tangent to the vector
field of the dynamics (\ref{eq:op:dyn_m}) and (\ref{eq:op:dyn_e}) approximating
their solution, and the corresponding formulae are treated as constraints in the
optimization iteration. In a collocation approach, we use a Gau\ss-Radau formula
with linear, quadratic, and cubic polynomials, respectively, because they have
stiff decay \cite{Ascher1998}.

\subsection{Solution of the finite-dimensional optimization problem}
In both cases (sequential and simultaneous), the result of the discretization is
a finite-dimensional nonlinear programming (NLP) problem. This can be solved
using an SQP algorithm or an IP method. The SQP algorithm treats the inequality
constraints using an active set strategy, see e.g.\ \cite{Nocedal2006}. Newton's
method is applied to the first order optimality conditions of a quadratic
approximation of the NLP problem including only equality and active inequality
constraints. By activating and deactivating constraints, the active set in the
solution is identified. In contrast in an IP method, the inequality constraints
are coupled to the objective function via a barrier term forcing the iterates
into the interior of the feasible domain. The resulting equality constrained NLP
problem is solved with a homotopy method: Newton's method is applied to the
first order optimality conditions. In the progress of optimization, the barrier
parameter is driven to zero to follow an homotopy path to the solution of the
NLP problem.

\subsection{Algorithms and software}
In Section \ref{s:results}, we present results of an application of the model
reduction method. We use IPOPT \cite{Waechter2006} as the main optimization
tool. It turned out to be a robust IP algorithm appropriate for our problems.
For the solution of linear equation systems within the optimization algorithm,
HSL routines \cite{HSL2007} and MUMPS \cite{Amestoy2001}, resp., are used.
Derivatives needed for the optimization are computed with the open source
automatic differentiation package CppAD \cite{Bell2010,Bell2008}.

For discretization of the optimization problem, we use a collocation approach
based on a Gau\ss-Radau method \cite{Ascher1998}. Alternatively, we use a
shooting approach including a BDF integrator that has been developed by
D.~Skanda for \cite{Skanda2012}.
For the numerical solution strategies, see also \cite{Lebiedz2012}.
We use MATLAB for plotting.

\section{Results}\label{s:results}
In this section, results for the application of the optimization based model
reduction method are shown. As this manuscript is focused on the application to
realistic systems, we skip a discussion of test equations for model reduction
methods. These can be found in \cite{Lebiedz2010,Lebiedz2011a}. We only consider
combustion mechanisms providing the complete kinetic and thermodynamic data.

In \cite{Lebiedz2011a}, it has been shown that the reverse mode ($t_*=t_\f$) of
the method identifies the correct SIM in case of a linear test model and the
Davis--Skodje test model \cite{Davis1999}, which has an analytically given
one-dimensional SIM, for infinite time horizon $t_0 \rightarrow -\infty$. Hence
we use the reverse mode for all results presented here.

\subsection{Ozone decomposition}\label{ss:o3}
As a first small test problem, we consider an ozone decomposition mechanism
including only three allotropes of oxygen, namely atomic oxygen, dioxygen and
ozone. The mechanism is given in Table \ref{t:O3}.
\begin{table}
\caption
 {\label{t:O3}Ozone decomposition mechanism for the forward rates as in
 \cite{Maas1989}. Collision efficiencies in reactions including M:
 $\alpha_{\species{O}}=1.14, \alpha_{\species{O_2}} = 0.40, \alpha_{\species{O_3}} = 0.92$.}
\begin{tabular}{lclrrr}
\hline\noalign{\smallskip}
 Reaction &&& $A$ / $\textrm{cm}, \textrm{mol}, \textrm{s}$ & $b$ & $E_\textrm{a}$ / $\si{\kilo\joule\per\mole}$ \\
\noalign{\smallskip}\hline\noalign{\smallskip}
 $\species{O} + \species{O} + \species{M}$ & $\rightleftharpoons$ & $\species{O_2} + \species{M}$ &  $2.9\times 10^{17}$ & $-1.0$ & $0.0$  \\
 $\species{O_3} + \species{M}$ & $\rightleftharpoons$ & $\species{O} + \species{O_2} + \species{M}$ & $9.5\times 10^{14}$ & $0.0$ &  $95.0$ \\
 $\species{O} + \species{O_3}$ & $\rightleftharpoons$ & $\species{O_2} + \species{O_2}$ & $5.2\times 10^{12}$ & $0.0$ & $17.4$\\
\noalign{\smallskip}\hline
\end{tabular}
\end{table}

The thermodynamic data is used in form of NASA polynomial coefficients. In
comparison to the results in \cite{Lebiedz2010}, we set up the mechanism in the
framework described in Section \ref{s:chemistry}. Reverse rate coefficients are
derived from equilibrium thermodynamics. For mass conservation, the elemental
specific mole has to be given which is
\begin{equation*}
\bar{z}_\species{O}= \frac{1000}{\SI{15.999}{\gram\per\mole}} = \SI{62.5}{\mole\per\kilogram}.
\end{equation*}
We consider a density of
$\rho = \SI{0.2}{\kilogram\per\cubic\meter}$ in the isochoric case and a
pressure of $p = \SI{e+5}{\pascal}$ for isobaric conditions, respectively.

In case of the ozone mechanism (Table \ref{t:O3}), it is technically not
necessary to demand positiveness of specific moles and temperature because
no pressure dependent reactions are present. Therefore, the mechanism can
be evaluated also in nonphysical regions (negative species concentrations
visible in the plots).

Results for the four different thermodynamic environments are shown in
Figure~\ref{f:O3-c1}, \ref{f:O3-c2}, \ref{f:O3-c3}, and \ref{f:O3-c4}. The model
has two degrees of freedom; we compute a numerical approximation of a
one-dimensional SIM. The (blue) open rings in the plots are the solution points
$(z,T)(t_*)$. Orbits through these points in forward and reverse direction are
also shown (blue curves), where the reverse part coincides with the optimal
trajectory piece $(z,T)(t)$, $t\in[t_0,t_\f]$. The trajectories converge toward
equilibrium which is shown as full (red) dot on the left hand side of the
subfigures.

\begin{figure}
 \includegraphics[width=\textwidth]{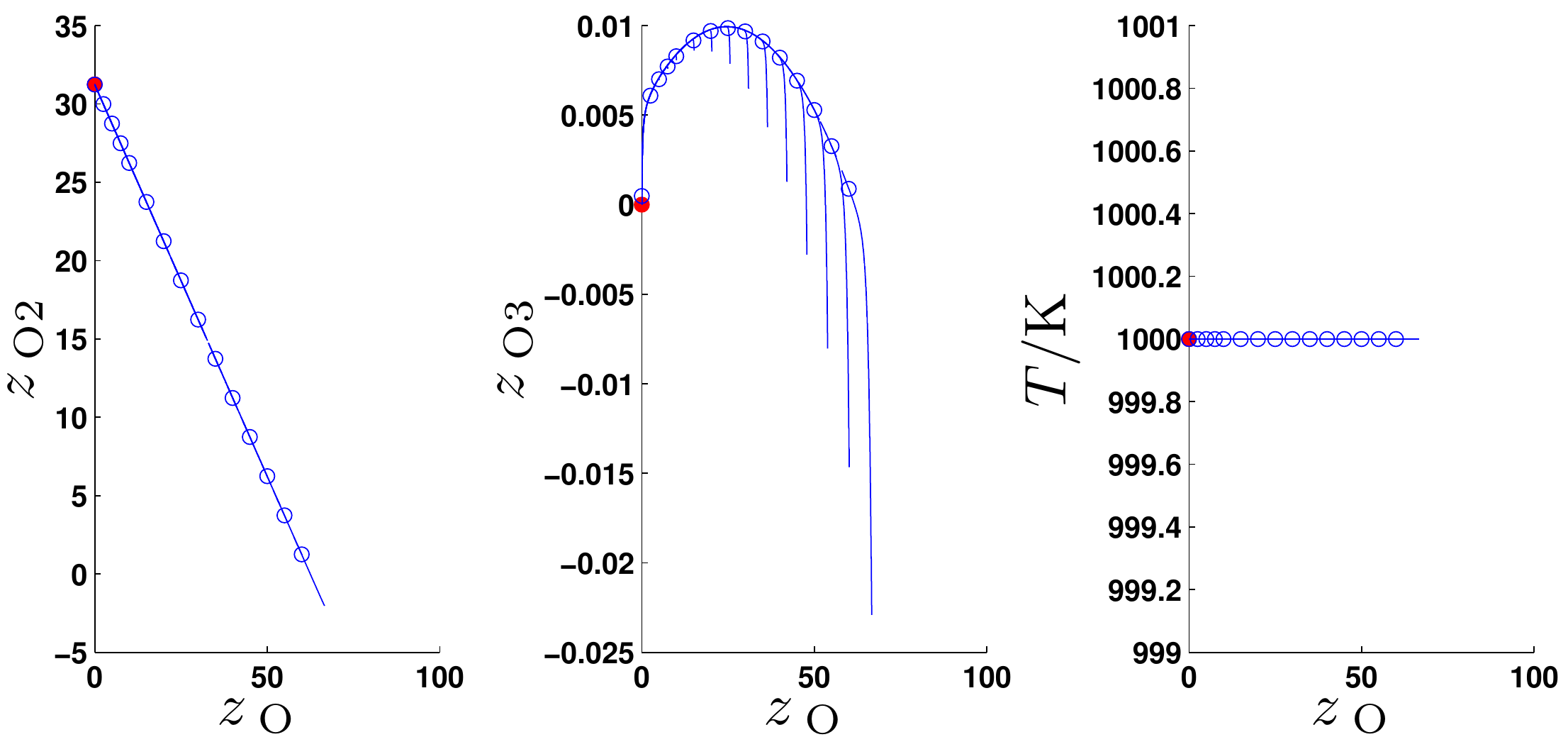}
 \caption{\label{f:O3-c1}Results (one-dimensional SIM) for the ozone
 decomposition mechanism modeled within an isothermal and isochoric environment.
 The value of $z_\species{O}$ serves as reaction progress variable. The
 optimization problem is solved several time for different values of
 $z_\species{O}^{t_*}$ varying between zero and the largest possible value
 $\bar{z}_\species{O}$. We use the reverse mode ($t_\f=t_*$) with an integration
 interval of $t_\f-t_0 = \SI{e-6}{\second}$. The free $z_i$
 (in $\si{\mole/\kilogram}$) and temperature are plotted versus
 $z_\species{O}$.}
\end{figure}
\begin{figure}
 \includegraphics[width=\textwidth]{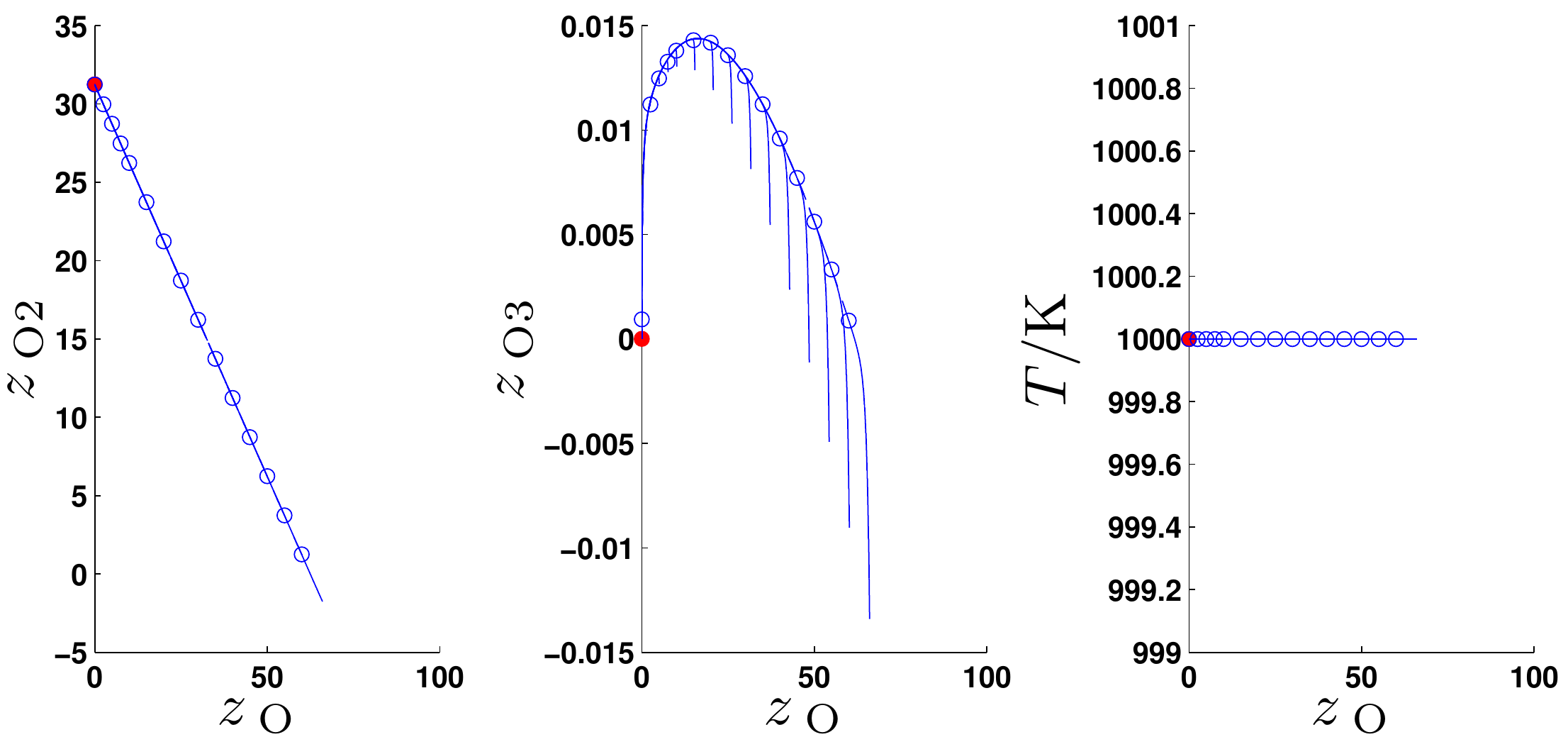}
 \caption{\label{f:O3-c2}Results (one-dimensional SIM) for the ozone
 decomposition mechanism modeled within an isothermal and isobaric environment.
 The plot is arranged as in Figure \ref{f:O3-c1}. Again we use the reverse
 mode ($t_\f=t_*$) with an integration interval of
 $t_\f-t_0 = \SI{e-6}{\second}$.}
\end{figure}
\begin{figure}
 \includegraphics[width=\textwidth]{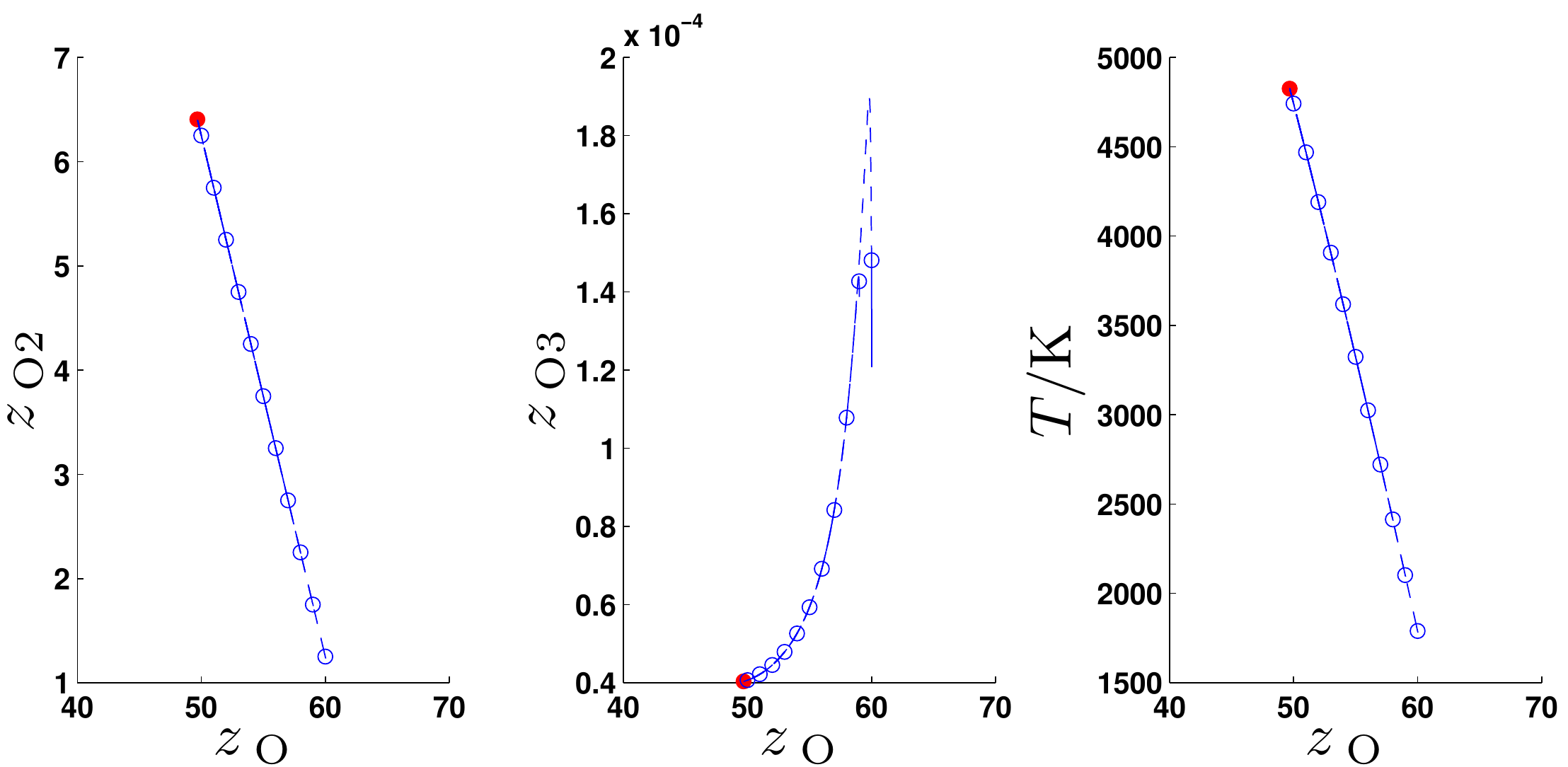}
 \caption{\label{f:O3-c3}Results (one-dimensional SIM) for the ozone
 decomposition mechanism modeled within an adiabatic and isochoric environment.
 The plot is arranged as in Figure \ref{f:O3-c1}. As in the adiabatic case, the
 reaction evolves faster the integration horizon is reduced to
 $t_\f-t_0 = \SI{e-8}{\second}$.}
\end{figure}
\begin{figure}
 \includegraphics[width=\textwidth]{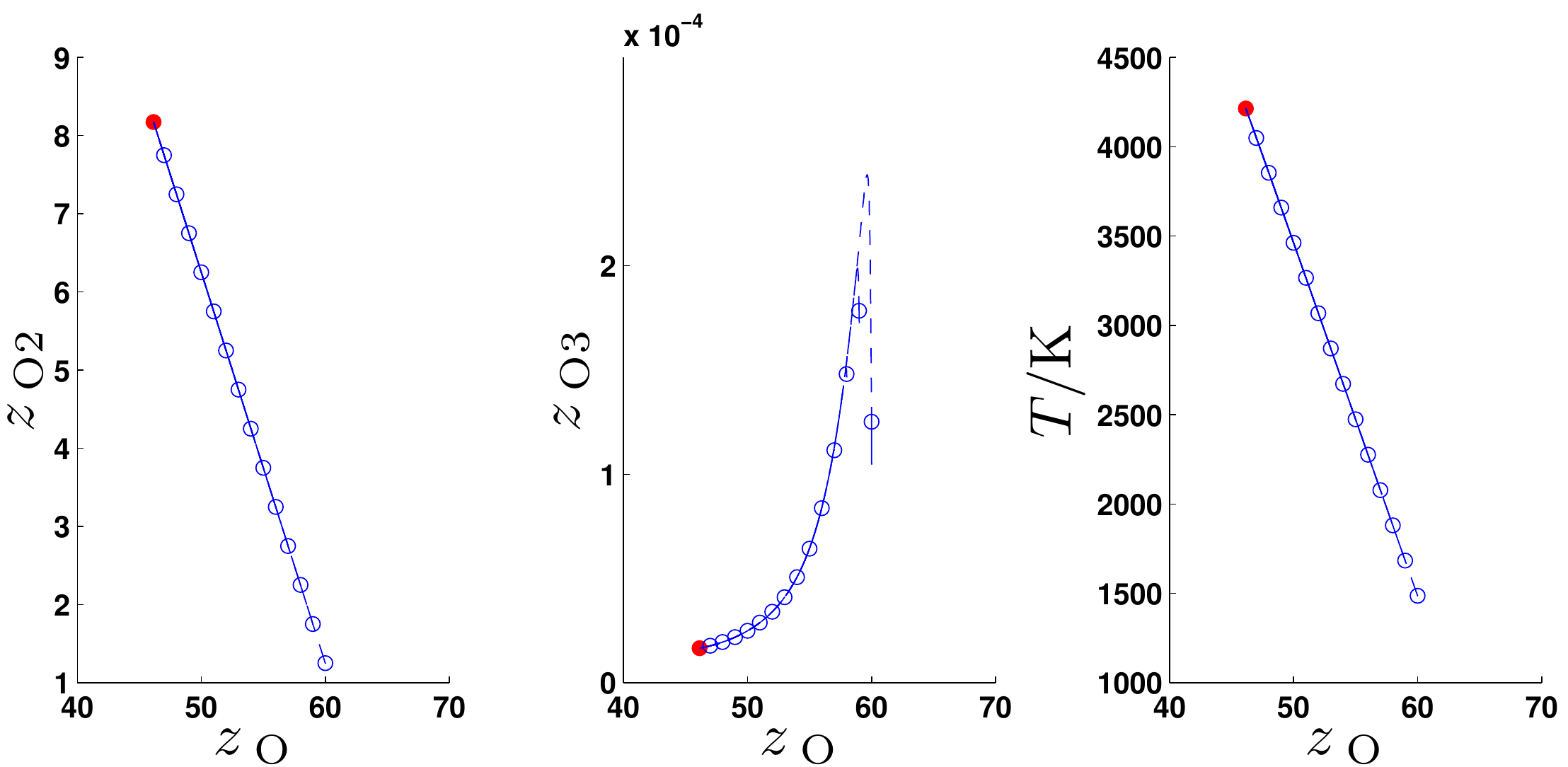}
 \caption{\label{f:O3-c4}Results (one-dimensional SIM) for the ozone
 decomposition mechanism modeled within an adiabatic and isobaric environment.
 The plot is arranged as in Figure \ref{f:O3-c1}. Again the integration horizon
 is $t_\f-t_0 = \SI{e-8}{\second}$.}
\end{figure}

In all plots (Figure \ref{f:O3-c1}--\ref{f:O3-c4}), it can be seen that near
equilibrium very good results can be achieved as the SIM approximation
is nearly invariant, i.e.\ all open dots are lying along one (slow) trajectory.
But far from equilibrium at large values of the reaction progress variable 
$z_\species{O}$, the full dynamics are active, so the invariance of the SIM is
poor due to a lack of time scale separation. A short relaxation phase can be
stated, but at least the values are in a reasonable range.

\subsection{Simplified hydrogen combustion mechanism}
In this section, we review a test case for a simplified combustion mechanism. The
presentation and results are similar to those presented in \cite{Lebiedz2011a}.

The reaction mechanism is given in Table \ref{t:ICE}. In \cite{Lebiedz2011a}, we
show a comparison to the results of Al-Khateeb et al.\ in
\cite{Al-Khateeb2009,Al-Khateeb2010}. Hence we use the thermodynamical data
(in form of NASA coefficients) we
received from J.~M.~Powers and A.~N.~Al-Khateeb. The mechanism itself has been
published originally in \cite{Li2004}. The simplified version shown in
Table~\ref{t:ICE} has been used by Ren et al.\ in \cite{Ren2006a}. The mechanism
itself consists of five reactive species and inert nitrogen, where in
comparison to a full hydrogen combustion mechanism the species $\species{O}_2$,
$\species{HO}_2$, and $\species{H}_2\species{O}_2$ are removed. The species
are involved in six Arrhenius type reactions, where three
combination/decomposition reactions require a third body for an effective
collision.

\begin{table}
 \caption
  {\label{t:ICE}The simplified mechanism as used in \cite{Ren2006a}. Collision efficiencies $\species{M}$:
  $\alpha_{\species{H}} = 1.0, \alpha_{\species{H_2}} = 2.5, \alpha_{\species{OH}} = 1.0$,
  $\alpha_{\species{O}} = 1.0, \alpha_{\species{H_2O}} = 12.0, \alpha_{\species{N_2}} = 1.0$.}
\begin{tabular}{lclrrr}
\hline\noalign{\smallskip}
 Reaction &&& $A$ / ($\textrm{cm}, \textrm{mol}, \textrm{s}$) & $b$ & $E_\textrm{a}$ / $\si{\kilo\joule\per\mole}$ \\
\noalign{\smallskip}\hline\noalign{\smallskip}
 $\species{O} + \species{H_2} $ & $\rightleftharpoons$ & $\species{H} + \species{OH}$ &  $5.08\times 10^{04}$ & $2.7$ & $26.317$  \\
 $\species{H_2} + \species{OH}$ & $\rightleftharpoons$ & $\species{H_2O} + \species{H}$ & $2.16\times 10^{08}$ & $1.5$ & $14.351$ \\
 $\species{O}$ + $\species{H_2O}$ & $\rightleftharpoons$ & $\species{2}\,\species{OH}$ & $2.97\times 10^{06}$ & $2.0$ & $56.066$ \\
 $\species{H_2} + \species{M}$ & $\rightleftharpoons$ & $\species{2}\,\species{H} + \species{M}$ & $4.58\times 10^{19}$ & $-1.4$ & $436.726$ \\
 $\species{O} + \species{H} + \species{M}$ & $\rightleftharpoons$ & $\species{OH} + \species{M}$ & $4.71\times 10^{18}$ & $-1.0$ & $0.000$ \\
 $\species{H} + \species{OH} + \species{M}$ & $\rightleftharpoons$ & $\species{H_2O} + \species{M}$ & $3.80\times 10^{22}$ & $-2.0$ & $0.000$ \\
\noalign{\smallskip}\hline
\end{tabular}
\end{table}

Al-Khateeb et al.\ identified a one-dimensional SIM for a model including this
mechanism \cite{Al-Khateeb2009}. The model additionally involves the following
parameters: The combustion is considered in an isothermal and isobaric
environment with a temperature of $T=\SI{3000}{\kelvin}$ and a pressure
of $p=\SI{101325}{\pascal}$. The elemental mass conservation is given in
terms of amount of species; it is
\begin{equation*}
 \begin{aligned} \label{eq:conserve}
  n_\mathrm{H}+2\,n_{\mathrm{H}_2}+n_\mathrm{OH}+2\,n_{\mathrm{H}_2\mathrm{O}} &= \SI{1.25e-3}{\mole} \\
  n_\mathrm{OH}+n_\mathrm{O}+n_{\mathrm{H}_2\mathrm{O}} &= \SI{4.15e-4}{\mole} \\
  2\,n_{\mathrm{N}_2} &= \SI{6.64e-3}{\mole}.
 \end{aligned}
\end{equation*}
Therefore, the total mass in the system is $m=\SI{1.01e-4}{\kilogram}$. We
continue to use the specific moles as our standard variables here and use
$z_i = \tfrac{n_i}{m}$ in the following.

The results are shown in Figure \ref{f:Li_mech_3d}. There is a very good
agreement of our results with theirs. Even on both sides of equilibrium,
our approximations coincide with the correct one-dimensional
SIM on its both branches which consist of two heteroclinic orbits in this case.
\begin{figure}
 \includegraphics[width=\textwidth]{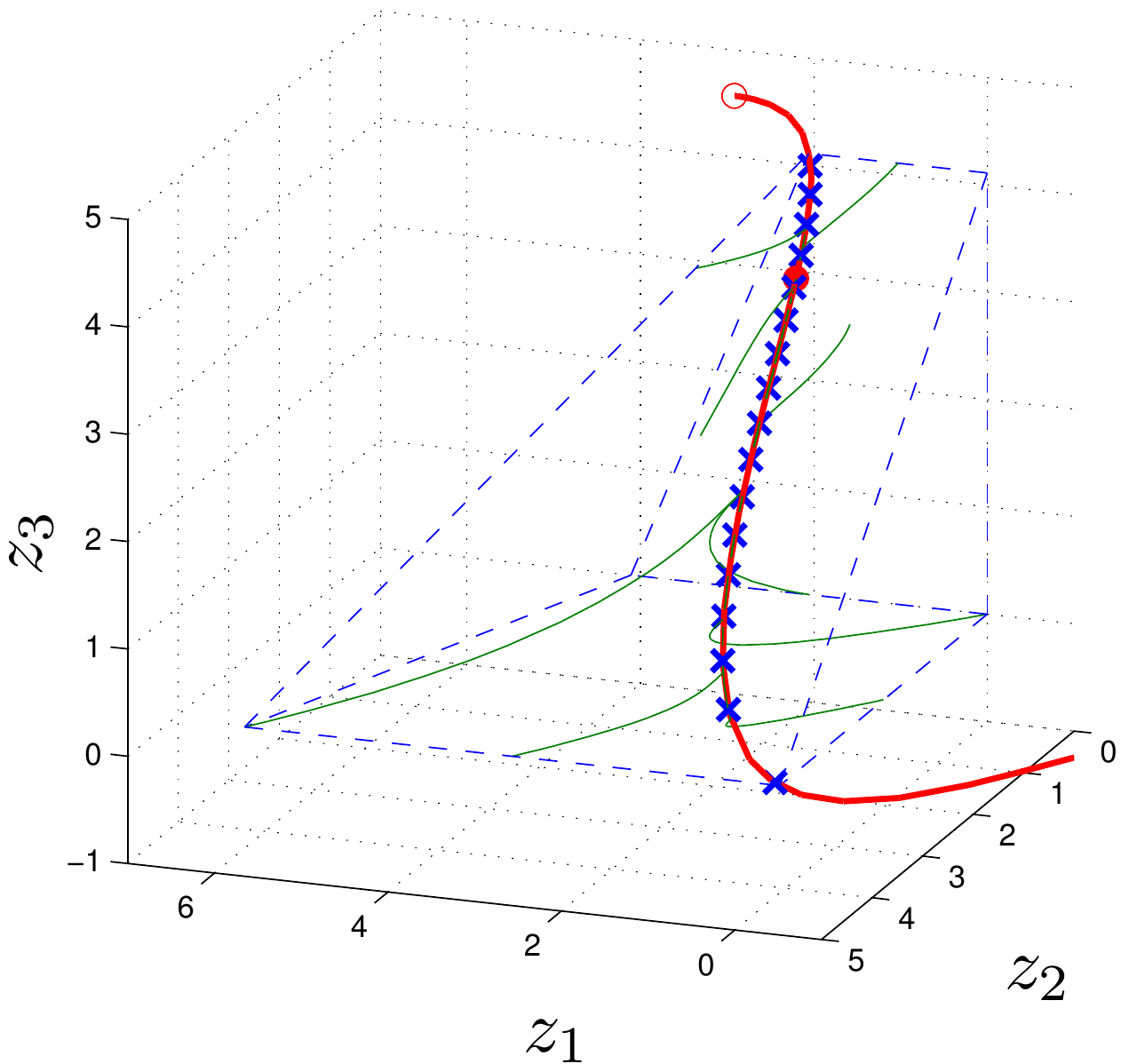}
 \caption{\label{f:Li_mech_3d}Three-dimensional plot of the results for the numerical
 approximation of a one-dimensional SIM of the simplified combustion mechanism
 computed with the reverse mode, $z_{\mathrm{H}_2\mathrm{O}}$ as reaction
 progress variable, and $t_\f - t_0 = \SI{e-7}{\second}$.
 The illustration is similar to Figure 9 in \cite{Al-Khateeb2009}. The blue
 bounded polytope shows the physically feasible state space. Green curves
 correspond to some ``arbitrary'' trajectories for illustration. The red
 curve depicts the two branches of the SIM as derived in \cite{Al-Khateeb2009}.
 The open red dot represents the unstable fixed point ($R_6$ in
 \cite{Al-Khateeb2009}); the full red dot represents the equilibrium ($R_7$).
 Our results are included as blue crosses. The state of the species is given as
 $z_1=z_{\species{H}_2}$, $z_2=z_{\species{O}}$, and
 $z_3=z_{\species{H}_2\species{O}}$ in $\si{\mole\per\kilogram}$.}
\end{figure}

\subsection{Syngas combustion mechanism}\label{ss:syngas}
As a last example of a full detailed chemistry combustion mechanism, we use a
syngas combustion extracted from the GRI 3.0 mechanism \cite{Smith1999}. It
consists of all 33 reactions of the GRI 3.0 mechanism which involve no other
species than $\species{O}$, $\species{O}_2$, $\species{H}$, $\species{OH}$,
$\species{H}_2$, $\species{HO}_2$, $\species{H}_2\species{O}_2$,
$\species{H}_2\species{O}$, $\species{N}_2$, $\species{CO}$, and
$\species{CO}_2$. Those 33 reactions can be split up into 31 Arrhenius-type and
two pressure-dependent ones.

The overall reaction can be stated as
\begin{equation*}
 \species{H}_2 + \species{CO} + \species{O}_2 \rightarrow \species{H}_2\species{O} + \species{CO}_2,
\end{equation*}
where $\species{N}_2$ only serves as a collision partner. We assume a
stoichiometric mixture of syngas with air in an adiabatic and isochoric
environment. As fixed mass density, we use
$\rho = \SI{0.3}{\kilogram\per\cubic\meter}$.
We assume a ratio of $n_{\species{H}_2} : n_{\species{CO}} = 1 : 1$, and a ratio
of $n_{\species{O}_2} : n_{\species{N}_2} = 1 : 3.76$. This leads to a unburned
mixture of
$z_{\species{CO}} = z_{\species{H}_2} = \SI{5.973}{\mole\per\kilogram}$ and
$z_{\species{N}_2} = \SI{22.46}{\mole\per\kilogram}$. The specific internal
energy of this mixture at a temperature of $T=\SI{1000}{\kelvin}$ is used as a
fixed specific internal energy for SIM computation.

Results for this model are shown in Figure \ref{f:Syngas}.
\begin{figure}
 \includegraphics[width=\textwidth]{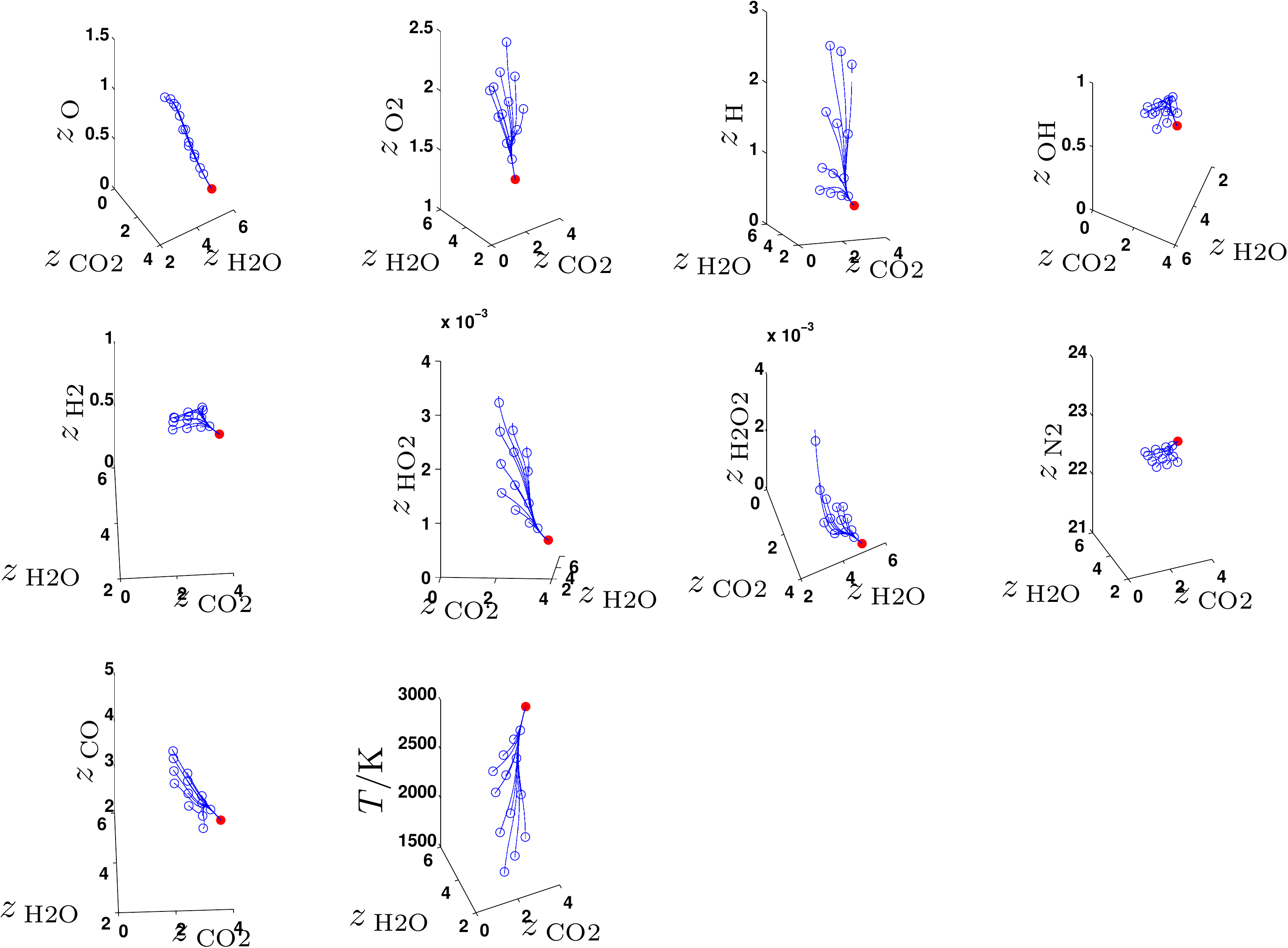}
 \caption{\label{f:Syngas}Visualization of the results (two-dimensional SIM) of
 the model reduction method applied to the syngas combustion model described in
 Section~\ref{ss:syngas}. The reverse mode is applied with a time horizon of
 $\SI{e-7}{\second}$. We approximate a two-dimensional manifold and
 use the overall products $z_{\species{H}_2\species{O}}$ and
 $z_{\species{CO}_2}$ as reaction progress variables.}
\end{figure}
The same style as in Figure \ref{f:O3-c2} is used. The resulting SIM
approximation points (solutions of the optimization problem (\ref{eq:op})) are
shown as open blue dots together with trajectories emanating from these and
converging to equilibrium, which is shown as full red dot. In the
two-dimensional case, invariance cannot be seen by eye inspection, but a
reasonable manifold is found.

\section{Conclusions}\label{s:summary}
An optimization method is presented that allows for efficient model reduction of
realistic combustion models. It is applied to three models for testing its
applicability. It can be seen that for an appropriate time scale separation the
solution of an optimization problem approximates points on a slow invariant
manifold. The application of the model reduction approach to a realistic syngas
combustion model considered in an adiabatic and isochoric environment
demonstrates the applicability to realistic large scale mechanisms.

Further research will be needed for an identification of an appropriate number
and choice of the reaction progress variables for large scale mechanisms.

\begin{acknowledgements}
This work was supported by the German Research Foundation (DFG) via
project B2 within the Collaborative Research Center (SFB) 568.

The authors wish to thank the late J\"urgen~Warnatz (IWR, Heidelberg) for
providing professional mentoring for combustion research. The authors also thank
Markus~Nullmeier (IWR, Heidelberg) for his helpful feedback.
\end{acknowledgements}




\end{document}